\newtheorem{lemma}{Lemma}
\newtheorem{theorem}{Theorem}
\theoremstyle{definition}
\crefname{lemma}{Lemma}{Lemmas}
\Crefname{lemma}{Lemma}{Lemmas}
\crefname{theorem}{Theorem}{Theorems}
\Crefname{theorem}{Theorem}{Theorems}
\crefname{corollary}{Corollary}{Corollaries}
\Crefname{corollary}{Corollary}{Corollaries}
\crefname{observation}{Observation}{Observations}
\Crefname{observation}{Observation}{Observations}
\crefname{definition}{Definition}{Definitions}
\Crefname{definition}{Definition}{Definitions}
\crefname{section}{Section}{Sections}
\Crefname{section}{Section}{Sections}
\crefname{figure}{Figure}{Figures}
\Crefname{figure}{Figure}{Figures}
\renewcommand{\subset}{\subseteq}
\newcommand{\ceil}[1]{\left\lceil{#1}\right\rceil}
\newcommand{\floor}[1]{\left\lfloor{#1}\right\rfloor}
\newcommand{\abs}[1]{\left | #1 \right |}
\newcommand{\set}[1]{\left \{ #1 \right \}}
\newcommand{\norm}[1]{\abs{\abs{#1}}}
\newcommand{\Z}{\mathbb Z} 
\newcommand{\proj}[2]{\left [ #2 \right ]_{#1}}
\newcommand{\emb}[2]{\iota_{#1}\!\left( #2 \right)}
\newcommand{\embDef}[1]{\iota_{#1}}
\newcommand{\modNorm}[2]{\norm{#2}_{#1}}
\newcommand{\A}{\mathcal{A}}
\newcommand{\E}{\operatorname{E}}
\newcommand{\Prp}[1]{\Pr\!\left[{#1} \right]}
\newcommand{\Ep}[1]{\E\!\left[{#1} \right]}
\newcommand{\Oh}[1]{O\!\left(#1\right)}
\newcommand{\Omegah}[1]{\Omega\!\left(#1\right)}
\newcommand{\hb}{\bar{h}}
\definecolor{shade}{RGB}{235,235,235}
\title{Linear Hashing is Awesome}
\author{Mathias Bæk Tejs Knudsen\thanks{
	Research partly supported by
	Advanced Grant DFF-0602-02499B from the Danish Council for Independent Research
	under the Sapere Aude research career programme
	and by the FNU project AlgoDisc - Discrete Mathematics, Algorithms, and Data Structures}}
\affil{University of Copenhagen,\\
    \tt{mathias@tejs.dk}
}
\date{}
\begin{document}
\maketitle
\begin{abstract}
We consider the hash function $h(x) = ((ax+b) \bmod p) \bmod n$ where
$a,b$ are chosen uniformly at random from $\set{0,1,\ldots,p-1}$. We prove
that when we use $h(x)$ in hashing with chaining to insert $n$ elements
into a table of size $n$ the expected length of the longest chain is
$\tilde{O}\!\left(n^{1/3}\right)$. The proof also generalises to give the same bound
when we use the multiply-shift hash function by Dietzfelbinger et al. [Journal of Algorithms 1997].
\end{abstract}
\thispagestyle{empty}

\newpage
\setcounter{page}{1}

\section{Introduction}

In this paper we study the hash function $h : [p] \to [m]$ (where $[m] = \set{0,1,\ldots,m-1}$)
defined by $h(x) = ((ax+b) \bmod p) \bmod m$, where $a,b \in [p]$ are chosen uniformly at random
from $[p]$. Here, $p$ is a prime and $p \ge m$. We assume that we have a set $X \subset [p]$ of $n$
\emph{keys} with $n \le m$ and use $h$ to assign a hash value $h(x)$ to each key $x \in X$. We
are interested in the frequency of the most popular hash value, i.e. we study the random variable
$M(h,X)$ defined by
\begin{align}
	\label{eq:defineMIntro}
	M(h,X) = \max_{y \in [m]} \abs{\set{x \in X \mid h(x) = y}}
	\, .
\end{align}
In \Cref{thm:modP} we prove that $\Ep{M(h,X)} = \Oh{\sqrt[3]{n\log n}}$. We also consider
the hash function $\hb : [q] \to [m]$ defined by $\hb(x) = \floor{\frac{(ax) \bmod q}{q/m}}$,
where $q,m$ are powers of $2$, $q \ge m \ge n$ and $a$ is chosen uniformly at random among the
odd numbers from $[q]$. The function $\hb(x)$ was first introduced by
Dietzfelbinger et al. \cite{dietzfelbinger1997reliable}.
In \Cref{thm:multiplyShift} we prove that it also holds that
$\Ep{M(\hb,X)} = \Oh{\sqrt[3]{n \log n}}$.

We note that when we use $h(x) = ((ax+b) \bmod p) \bmod m$ in hashing with chaining, $M$
is the size of the largest chain. When scanning the hash table for an element the expected
time used is $\Oh{1}$ and the worst case time is at most $\Oh{M(h,X)}$. 

\subsection{Related work}
\newcommand{\F}{\mathcal{F}}

It is folklore that the size of the largest chain is $\Oh{\sqrt{n}}$ and this bounds hold
for any $2$-independent hash function.

Alon et al. \cite{isLinearHashingGood} considers the linear hash function $h_{m,k} : \F^m \to \F^k$,
where $\F$ is a finite field and $n = \abs{\F}^k$. The function is defined by
$h_{m,k}(x_1,\ldots,x_m) = \sum_i x_i a_i$, where $a_i \in \F^k$ is chosen uniformly
at random. For $m=2,k=1$ the hash function is $h_{2,1}(x,y) = ax+by$ where $a,b \in \F$
are chosen uniformly at random. It is shown in \cite{isLinearHashingGood} that there exists
a set $X \subset \F^2$ such that $\Ep{M(h_{2,1},X)} > \sqrt{n}$ if $n$ is a square and
$\Ep{M(h_{2,1},X)} = \Omegah{\sqrt[3]{n}}$ if $n$ is a prime power that is not a square.
In \cite{isLinearHashingGood} it is also shown that when $\F$ is the field of two elements the
expected length of the longest chain is $\Oh{\log n \log \log n}$ improving the results in
\cite{markowskyCW1978analysis,mehlhornVishkin1984randomized}.

Broder et al. \cite{broderCFM2000minwise} considered $h(x) = (ax+b) \bmod p$ in the context
of min-wise hashing.



\section{Preliminaries}


$\Z$ denotes the integers, and $\Z_n = \Z/n\Z$ denotes the integers
$\bmod\ n$. $\Z_n^*$ is the set of elements of $\Z_n$ having a multiplicative
inverse. $[n]$ is the set of integers from $0$ to $n-1$, that is
$[n] = \set{0,1,2,\ldots,n-1}$.
For a pair of integers $n,m \in \Z$ such that $(n,m) \neq (0,0)$ we let $\gcd(n,m)$
denote the greatest common divisor of $n$ and $m$. If $\gcd(n,m) = 1$ then $n$ and $m$
are said to be \emph{coprime}.

For integers $x,r \in \Z$ we let $\proj{r}{x} \in \Z_r$ denote the residue class of $x \bmod r$.
We let $\embDef{r} : \Z_r \to [r]$ be the unique mapping that satisfies $\proj{r}{\emb{r}{x}} = x$.
For $x \in \Z_r$ we let $\modNorm{r}{x} = \min\set{\emb{r}{x}, \emb{r}{-x}}$.

For a set $S$ and an element $x$ the sets $S+x$ and $xS$ are defined as $\set{s+x \mid s \in S}$ and
$\set{xs \mid s \in S}$, respectively.

For integers $r,s,m$, let $I_m(r,s) \subset \Z_m$ denote the set
\begin{align*}
	I_m(r,s) = \set{\proj{m}{r}, \proj{m}{r+1}, \ldots, \proj{m}{r+s-1}}
	\, .
\end{align*}
The set $I_m(r,s)$
is called an \emph{interval}. A non-empty set $X \subset \Z_m$ is an interval if there 
exists $r,s$ such that $X = I_m(r,s)$.
\section{Main Result}

In this section we prove the main results of this paper, namely \Cref{thm:modP}
and \Cref{thm:multiplyShift}. The proofs of the two theorems are very similar and 
both rely on \Cref{lem:technical} below.
\begin{lemma}
	\label{lem:technical}
	Let $n,M,r$ be integers satisfying $4M \le n \le r$ and let $A \subset \Z_r$ be a set
	of size $n$. Let $B \subset \Z_p^*$ be a set of size $\le M$ satisfying the
	following conditions:
	\begin{itemize}
		\item[1] $\iota(b) \in \left(M,2M\right)$ for all $b \in B$.
		\item[2] $\iota(b), \iota(b')$, and $r$ are pairwise coprime for every $b,b' \in B$ with $b \neq b'$.
	\end{itemize}
	Assume that for every $b \in B$ there exists an interval $I_b$ of size $\ceil{\frac{r}{n}}$
	such that $I_b \cap bA$ contains at least $4M$ elements. Then there exists at least
 	$M\abs{B}$ ordered pairs of different elements $a,a' \in A$ such that
	$\abs{a-a'} < \frac{r}{nM}$.
\end{lemma}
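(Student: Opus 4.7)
The plan is to combine a pigeonhole step inside each interval $I_b$ with a Diophantine argument that exploits the coprimality hypothesis, closing with a double-counting step.

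First, for every $b \in B$ I would fix a subset $S_b \subset A$ of size exactly $4M$ with $bS_b \subset I_b$, and partition $I_b$ into $M$ sub-intervals each of length at most $\ceil{r/(nM)}$. The $4M$ images $\set{ba : a \in S_b}$ distribute among these sub-intervals with counts $k_1, \ldots, k_M$ summing to $4M$; by Cauchy--Schwarz, $\sum_j k_j^2 \ge (4M)^2/M = 16M$, so the number of ordered pairs $(a, a') \in S_b^2$ with $a \ne a'$ whose images lie in a common sub-interval is $\sum_j k_j(k_j - 1) \ge 12M$. Each such pair satisfies $\modNorm{r}{b(a-a')} < r/(nM)$. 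Collecting these triples over $b \in B$ yields a multiset $T$ with $\abs{T} \ge 12M\abs{B}$.

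Second, I would use the coprimality to convert multiplicity in $T$ into closeness in $\Z_r$. Suppose two distinct $b_1, b_2 \in B$ both appear in $T$ with the same pair $(a, a')$, and write $e = a - a'$. Choose integers $k_1, k_2$ with $\abs{\iota(b_i)\iota(e) - k_i r} < r/(nM)$; cross-multiplying by $\iota(b_2)$ and $\iota(b_1)$ and subtracting gives $\abs{\iota(b_1) k_2 - \iota(b_2) k_1} < (\iota(b_1) + \iota(b_2))/(nM) < 4/n < 1$ (using $n \ge 4M$, assuming $M \ge 2$), so this integer vanishes. The coprimality of $\iota(b_1)$ and $\iota(b_2)$ then forces $k_i = \iota(b_i) t$ for a common integer $t$, and substituting back yields $\abs{\iota(e) - tr} < r/(nM \iota(b_1)) < r/(nM^2)$. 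Since $\iota(e) \in [0, r)$ we have $t \in \set{0, 1}$, so $\modNorm{r}{a - a'} < r/(nM^2) < r/(nM)$; the pair is automatically good.

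Finally, the counting. Let $N_g$ and $N_b$ denote the numbers of distinct pairs in $T$ with $\modNorm{r}{a - a'}$ less than and at least $r/(nM)$ respectively. The Diophantine step gives multiplicity at most $1$ for each bad pair and at most $\abs{B}$ for each good pair, so $12M\abs{B} \le \abs{T} \le \abs{B} N_g + N_b$, and it suffices to establish an upper bound $N_b \le M\abs{B}(12 - \abs{B})$ in order to conclude $N_g \ge M\abs{B}$. The main technical obstacle I expect is this upper bound on $N_b$: for each bad pair appearing in $T$, the unique $b \in B$ is pinned by $\modNorm{r}{b(a-a')} < r/(nM)$ together with $\iota(b) \in (M, 2M)$, which places the difference $e = a - a'$ on a union over $b \in B$ of short arithmetic progressions in $\Z_r$. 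I would combine the sparsity of this union with the bound $\abs{A} = n$ and a careful accounting of representation counts in $A - A$ to obtain the required estimate (upgrading, if necessary, to the coarser choice of using the entire $I_b$ in place of the $M$ sub-intervals when $\abs{B}$ is close to $M$).
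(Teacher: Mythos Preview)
Your Diophantine step is fine, but the final counting cannot be closed as you propose. The inequality you say would suffice, $N_b \le M\abs{B}(12-\abs{B})$, is vacuous (indeed negative) once $\abs{B} > 12$, so in the interesting range of $\abs{B}$ your reduction yields nothing. More fundamentally, the pairs you produce satisfy only $\modNorm{r}{b(a-a')} < r/(nM)$; the set of differences $e$ with this property is a union of $\iota(b)$ short arcs in $\Z_r$, only one of which lies near $0$, and you have no mechanism that forces the $S_b$-differences into that particular arc. The appeal to ``representation counts in $A-A$'' cannot rescue this, since the lemma assumes nothing about the additive structure of $A$.

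The paper sidesteps the whole issue by partitioning in $A$-space rather than in $bA$-space. Instead of cutting $I_b$ into $M$ sub-intervals, it writes $b^{-1}I_b \subset \Z_r$ as the union of its $\iota(b)$ natural interval components $I_{b,j}$, each of length at most $\ceil{r/(n\iota(b))}$; since $\iota(b) > M$, any two points of $A$ lying in the same $I_{b,j}$ already satisfy $\modNorm{r}{a-a'} < r/(nM)$, so every pair counted is automatically good and there is no $N_b$ to bound. The coprimality hypothesis is then used not for a Diophantine cancellation but to show that for $b \ne b'$ the sets $b^{-1}I_b$ and ${b'}^{-1}I_{b'}$ meet in at most one interval; this controls the overcounting factor $\delta(b,j)$, and a Cauchy--Schwarz step on the counts $\abs{A \cap I_{b,j}}$ weighted by $1/\delta(b,j)$ finishes the argument.
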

\begin{proof}
We note that for every $b$ the set $b^{-1}I_b$ is the union of $\iota(b)$ disjoint intervals of
size $\le \ceil{\frac{r}{n\iota(b)}}$, and we write it as such a union
$b^{-1}I_b = \bigcup_{j=0}^{\iota(b)-1} I_{b,j}$. For any $b,b' \in B, b \neq b'$	
the set $b^{-1}I_b \cap {b'}^{-1}I_{b'}$ is either empty or an interval.
So for each $b,b' \in B, b \neq b'$ there is at most one index $j \in [\iota(b)]$ such that the
intersection $I_{b,j} \cap {b'}^{-1}I_{b'}$ is non-empty. For every $b \in B$ and $j \in [\iota(b)]$,
let $\delta(b,j)$ denote the number of elements $b' \in B$ such that $I_{b,j} \cap {b'}^{-1}I_{b'}$ is non-empty.
Note that $\delta(b,j) \ge 1$ since $b \in B$. Furthermore $\sum_{j=0}^{\iota(b)-1} \delta(b,j) < \abs{B}+\iota(b) \le 3M$
since each ${b'}^{-1}I_{b'}$ has a non-empty intersection with at most one of the sets $I_{b,j}, j \in [\iota(b)]$.

The number of ordered pairs of different elements $(a,a') \in A \cap I_{b,j}$ such that $\abs{a-a'} < \frac{r}{nM}$
is exactly $\abs{A \cap I_{b,j}} \cdot \left ( \abs{A \cap I_{b,j}} - 1 \right )$ since $I_{b,j}$ is an
interval of size $\le \ceil{\frac{r}{n\iota(b)}}$ and $\iota(b) > M$.
Let $\tau(b,j) = \max\set{0,\abs{A \cap I_{b,j}}-1}$, then
the number of pairs is at least $(\tau(b,j))^2$.
We can lower bound the number of such pairs in $A$ by considering the pairs in $A \cap I_{b,j}$ for each
$b \in B$ and $j \in [\iota(b)]$ and note that each pair we count is counted at most $\delta(b,j)$ times.
This gives that the number of ordered pairs $(a,a') \in A$ such that $\abs{a-a'} < \frac{r}{nM}$
is at least:
\begin{align}
	\label{eq:sumLowerBound}
	\sum_{b \in B}
		\sum_{j \in [\iota(b)]}
			\frac{(\tau(b,j))^2}{\delta(b,j)}
\end{align}
For any $b \in B$, by the Cauchy-Schwartz inequality we have that:
\begin{align}
	\label{eq:cauchySchwartz}
	\left (
		\sum_{j \in [\iota(b)]}
			\delta(b,j)
	\right )
	\left (
		\sum_{j \in [\iota(b)]}
			\frac{(\tau(b,j))^2}{\delta(b,j)}
	\right )
	\ge 
	\left (
		\sum_{j \in [\iota(b)]}
			\tau(b,j)
	\right )^2
\end{align}
We clearly have that $\sum_{j \in [\iota(b)]} \tau(b,j) \ge 4M-\iota(b) \ge 2M$. Also recall, that
we have that
$\sum_{j \in [\iota(b)]} \delta(b,j) \le 3M$. Combining this with \eqref{eq:sumLowerBound}
and \eqref{eq:cauchySchwartz} gives 
that $A$ contains at least $\frac{4M\abs{B}}{3} \ge M\abs{B}$ of the desired pairs.
\end{proof}

Below is a proof of \Cref{thm:modP}.
\begin{theorem}
	\label{thm:modP}
	Let $n,m,p$ be integers with $p$ a prime and $p \ge m \ge n$. Let $X \subset \Z_p$
	be a set of $n$ elements. Let $h : \Z_p \to \Z_m$ be defined by
	$h(x) = \proj{m}{\emb{p}{ax+b}}$
	where $a,b \in \Z_p$ are chosen uniformly at random.
	Let $M = M(X)$ be the random variable counting the number of elements $x \in X$ that
	hash to the most popular hash value, that is
	\begin{align*}
		M = M(X) = \max_{y \in \Z_m} \abs{\set{x \in X \mid h(x) = y}}
		\, .
	\end{align*}
	Then
	\begin{align}
		\label{eq:exOfMModP}
		\Ep{M} = O\!\left ( \sqrt[3]{n \log n} \right )
		\, .
	\end{align}
\end{theorem}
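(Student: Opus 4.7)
The plan is to bound $\E[M]$ via a third-moment argument, using \Cref{lem:technical} to control the dominant error term. By Jensen's (equivalently Lyapunov's) inequality applied to the convex function $x \mapsto x^3$, one has $\E[M] \le \bigl(\E[M^3]\bigr)^{1/3}$, so it suffices to establish $\E[M^3] = O(n \log n)$.

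First I would reduce the bucketing to an interval-hitting problem. Let $r = m^{-1} \bmod p$, which exists since $p$ is prime and $p \ge m$. After the measure-preserving substitution $a' = ra \bmod p$ and $b' = rb \bmod p$ (both uniform), the preimage in $[p]$ of each bucket $y \in [m]$---namely the arithmetic progression $\set{y, y+m, y+2m, \ldots} \cap [p]$---becomes, under multiplication by $r$ in $\Z_p$, an interval of length at most $\ceil{p/m} \le \ceil{p/n}$. Define $N(a') = \max_J \abs{a'X \cap J}$, where $J$ ranges over intervals $J \subset \Z_p$ of size $\ceil{p/n}$. Then $M \le N(a')$ holds deterministically, and moreover $N(a')^3 \le C(a')$ where $C(a')$ denotes the number of ordered triples $(x_1, x_2, x_3) \in X^3$ with $\abs{a'(x_i - x_j)}_{\Z_p} \le \ceil{p/n}$ for every pair (since triples realising the maximum-interval count satisfy this pairwise-closeness property). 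Hence $\E[M^3] \le \E[C(a')]$.

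Expanding, $\E[C(a')] = \sum_{(x_1, x_2, x_3) \in X^3} \Pr_{a'}\!\left[\abs{a' d_{12}}_{\Z_p},\ \abs{a' d_{13}}_{\Z_p} \le \ceil{p/n}\right]$, where $d_{ij} = x_i - x_j$. For a ``generic'' triple, where the $\Z_p$-ratio $q = d_{13} d_{12}^{-1}$ has $\iota$-value far from simple rationals, this joint probability is $O(1/n^2)$, contributing $O(n)$ summed over all $n^3$ triples. The main task is to bound the contribution of ``structured'' triples. This is where \Cref{lem:technical} is applied: for each dyadic scale $M$ with $4M \le n$, taking $r = p$ and $A = X$ in the lemma, any pairwise coprime subset $B$ of $\set{a' \in \Z_p^* : \iota(a') \in (M,2M),\ N(a') \ge 4M}$ satisfies $\abs{B} \le 2p/M^2$---combining the lemma's conclusion with the trivial bound that $X$ has at most $2p/M$ ordered pairs at distance below $p/(nM)$, and noting that coprimality with $r = p$ is automatic since $p$ is prime. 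Summing the contributions of structured triples over dyadic scales of $M$, extracting pairwise coprime subsets via the primes in $(M,2M)$ (of density $\Theta(1/\log M)$) and using the symmetry $a' \leftrightarrow -a'$ to cover $\iota(a') > p/2$, should yield $\E[C(a')] = O(n \log n)$.

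Putting everything together, $\E[M] \le \bigl(\E[M^3]\bigr)^{1/3} \le \bigl(\E[C(a')]\bigr)^{1/3} = O\!\left(\sqrt[3]{n \log n}\right)$. The hardest step is the pairwise-coprime extraction with only logarithmic overhead: a naive approach fails when bad $a'$ concentrate on multiples of a single small prime, so one must decompose by the smallest prime factor of $\iota(a')$ or invoke the lemma at several scales simultaneously. The $\log n$ factor in the theorem statement arises precisely from this combinatorial cost.
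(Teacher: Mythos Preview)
Your third-moment framework is a reasonable alternative to the paper's tail-probability approach, but the way you invoke \Cref{lem:technical} does not do what you need. You apply the lemma with $A = X$ and $B$ a pairwise-coprime subset of $\set{a' : \iota(a') \in (M,2M),\ N(a') \ge 4M}$, concluding $\abs{B} \le 2p/M^2$. That application is formally valid, but it only controls those bad multipliers $a'$ whose representative $\iota(a')$ happens to lie in $(M,2M)$, i.e.\ is of size comparable to the collision threshold. A generic bad $a'$ has $\iota(a')$ of order $p$; the symmetry $a' \leftrightarrow -a'$ brings this down only to $p/2$, not to $O(M)$. Summing over dyadic $M$ with $4M \le n$ therefore covers at most the $a'$ with $\iota(a') \le n/2$, a set of density at most $n/p$ in $\Z_p^*$, and says nothing about the rest. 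The promised link from this to the count of ``structured triples'' in $\E[C(a')]$ is never made explicit, and I do not see how to supply it: close pairs in the \emph{fixed} set $X$ are a static quantity and cannot absorb the average over $a'$.

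The idea you are missing is the coupling the paper uses. The paper fixes the random $a$, sets $A = m^{-1}aX$, and takes $B$ to be the set of small primes $s \in (\alpha,2\alpha)$ for which the multiplier $sa$ is also bad. Because $a \mapsto sa$ is a bijection of $\Z_p^*$, each $s$ lands in $B$ with probability exactly $\delta = \Pr[M \ge 4\alpha]$, so $\E\abs{B} = \abs{S}\,\delta$ with $\abs{S} = \Theta(\alpha/\log\alpha)$. Now \Cref{lem:technical} applied to this $A$ and $B$ produces at least $\alpha\abs{B}$ ordered pairs $(x,x') \in X^2$ with $\modNorm{p}{a(x-x')} < p/(m\alpha)$; averaging over $a$ and comparing with the $2$-independence bound $2n/\alpha$ gives $\delta = O(n\log n/\alpha^3)$, whence $\E[M] = O((n\log n)^{1/3})$ by summing the tail. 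The crucial point is that the small numbers in the lemma are \emph{auxiliary} multipliers $s$, not the hash parameter itself; the coupling $a \mapsto sa$ is what transports information about a typical bad $a$ into the small-multiplier regime where the lemma has force.
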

\begin{proof}
We note that $\Ep{M \mid a=0} = n$ since $h$ is constant when $a=0$. Therefore:
\begin{align}
	\notag
	\Ep{M} = 
	\frac{p-1}{p} \Ep{M \mid a \neq 0} + 
	\frac{1}{p} \Ep{M \mid a = 0}
	< 
	\Ep{M \mid a \neq 0} + 1
	\, .
\end{align}
Therefore it suffices to bound the expected value of $M$ when $a$ is chosen uniformly
at random from $\Z_p \setminus \set{0} = \Z_p^*$ and not from $\Z_p$. So from now on, assume
that $a$ is chosen uniformly at random from $\Z_p^*$.

The random variables $a$ and $a^{-1}b$ are independent. Note, that $h(x)$ can be rewritten as
$h(x) = \proj{m}{\emb{p}{a(x+a^{-1}b)}}$. It clearly suffices to bound the expected value of $M$
conditioned on all possible values $a^{-1}b$. For any fixed value of $a^{-1}b = c$, the expected value
of $M$ conditioned on $a^{-1}b = c$ is the same as the expected value of $M(X+c)$ conditioned
on $b = 0$. Therefore it suffices to give the proof under the assumption that $b = 0$. So
we assume that $b = 0$.

Let $A = m^{-1}aX$, then there exists an interval $I_a$ of size at most $\ceil{\frac{p}{m}}$
that contains $M$ elements of $A$ for the following reason: Let $f : \Z_p \to \Z_m$ be defined
by $x \to \proj{m}{\emb{p}{x}}$. By definition, there exists a random variable $y \in \Z_p$ such
that $\abs{f^{-1}(y) \cap aX} \ge M$. And there exists a $i \in [m]$ such that
\begin{align*}
	f^{-1}(y) = \set{\proj{p}{i+km} \mid k \in \Z, 0 \le k < \frac{p-i}{m}}
	\, ,
\end{align*}
and hence $I_a = m^{-1}f^{-1}(y)$ is an interval of size $\le \ceil{\frac{p}{m}}$ that contains
$M$ elements of $A$.

Let $\alpha \in \left[1,\frac{n}{4}\right]$. We are now going to bound the probability that $M \ge 4\alpha$. Let 
$\delta = \Prp{M \ge 4\alpha}$ and let $\A$ be the set of all
elements $a_0 \in \Z_p^*$ such that $M \ge 4\alpha$ if $a = a_0$.

Let $S \subset \Z_p^*$ be the set of all elements $s \in \Z_p^*$
that satisfies that $\emb{p}{s}$ is a prime in the interval $(\alpha,2\alpha)$.
Let $B \subset S$ be the set of all elements $s \in S$ such that $as \in \A$. Note, that $B$
is a random variable. By linearity of expectation, we have that $\Ep{\abs{B}} = \abs{S}\delta$.
Recall, that $A = m^{-1}aX$. For any $b \in B$ we have that $ab \in \A$ and therefore there exists
an interval of size $\ceil{\frac{r}{n}}$ that contains at least $4\alpha$ elements of $bA$.
By \Cref{lem:technical}, this implies that there are $\alpha\abs{B}$ ordered pairs of
different elements $x,x' \in X$ such that $\modNorm{p}{ax-ax'} < \frac{p}{m\alpha}$.
So the expected number of elements $x,x' \in X$ such that $\modNorm{p}{a(x-x')} < \frac{p}{m\alpha}$
is at least $\alpha\Ep{\abs{B}} = \alpha \delta \abs{S}$. On the other hand, for each
ordered pair of different elements $x,x' \in X$ the probability that $\modNorm{p}{a(x-x')} < \frac{p}{m\alpha}$
is at most $\frac{2p}{m\alpha(p-1)}$, and by linearity of expectation the expected number of such
ordered pairs is at most
\begin{align*}
	n(n-1) \cdot \frac{2p}{m\alpha(p-1)}
	\le 
	\frac{2n}{\alpha}
	\, .	
\end{align*}
We conclude that $\alpha \delta \abs{S} \le \frac{2n}{\alpha}$. By the prime
number theorem, $\abs{S} = \Theta\!\left(\frac{\alpha}{\log \alpha}\right) =
\Omega\!\left(\frac{\alpha}{\log n}\right )$.
Reordering gives us that:
\begin{align}
	\notag
	\Prp{M \ge 4\alpha} =
	\delta =
	O\!\left (
		\frac{n \log n}{\alpha^3}
	\right )
	\, .
\end{align}
The expected value of $M$ can now be bounded in the following manner:
\begin{align*}
	\Ep{M}
	& =
	\sum_{k = 1}^{\infty} \Prp{M \ge k}
	\\
	& =
	\sum_{k = 1}^{\floor{\sqrt[3]{n \log n}}} \Prp{M \ge k}
	+
	\sum_{k = \floor{\sqrt[3]{n \log n}}+1}^{n} \Prp{M \ge k}
	\\
	& \le
	\floor{\sqrt[3]{n \log n}}
	+
	\sum_{k = \floor{\sqrt[3]{n \log n}}+1}^{n} 
	O\!\left (
		\frac{n\log n}{k^3}
	\right )
	\\
	& =
	O\!\left (
		\sqrt[3]{n \log n}
	\right )
\end{align*}
which was what we wanted.
\end{proof}

The proof of \Cref{thm:multiplyShift} is very similar to the proof
of \Cref{thm:modP} but we include it for completeness.
\begin{theorem}
	\label{thm:multiplyShift}
	Let $n,\ell,r,q,m$ be integers with $q = 2^r, m = 2^\ell$ and $q \ge m \ge n$. Let $X \subset \Z_q$
	be a set of $n$ elements. Let $h : \Z_q \to [m]$ be defined by
	$h(x) = \floor{\emb{q}{ax} \cdot 2^{\ell-r}}$ 
	where $a \in \Z_q^*$ are chosen uniformly at random.
	Let $M = M(X) = \max_{y \in [m]}\abs{\set{x \in X \mid h(x) = y}}$. Then
	\begin{align}
		\label{eq:exOfMMulShift}
		\Ep{M} = O\!\left ( \sqrt[3]{n \log n} \right )
		\, .
	\end{align}
\end{theorem}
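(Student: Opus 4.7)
The plan is to mirror the proof of \Cref{thm:modP} step by step, changing only the places where the multiply-shift setting differs from the mod-$p$ setting. Three differences are anticipated: no preliminary conditioning on $a$ is needed; the preimages $h^{-1}(y)$ already embed as intervals in $\Z_q$; and the anticoncentration step for $\modNorm{q}{a(x-x')}$ requires a short extra argument, since $x-x'$ need not be a unit modulo $q = 2^r$.

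First I would note that for every $y \in [m]$,
\[
	h^{-1}(y) = \set{y \cdot 2^{r-\ell},\, y \cdot 2^{r-\ell}+1,\, \ldots,\, (y+1) \cdot 2^{r-\ell}-1}
\]
embeds via $\embDef{q}$ as an interval in $\Z_q$ of size $q/m \le q/n$. Hence setting $A = aX \subset \Z_q$, for every outcome of $a$ there is an interval $I_a$ of size at most $\ceil{q/n}$ containing $M(X)$ elements of $A$. Since $a$ is already drawn uniformly from $\Z_q^*$ and the hash function carries no additive offset, none of the preliminary reductions appearing at the start of the proof of \Cref{thm:modP} are needed.

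Next, for $\alpha \in [1, n/4]$ I would let $\delta = \Prp{M \ge 4\alpha}$ and let $\A \subset \Z_q^*$ be the set of bad multipliers. To meet condition~2 of \Cref{lem:technical}, which demands pairwise coprimality of the $\iota(b)$ values with each other and with $q = 2^r$, I would take $S \subset \Z_q^*$ to be the set of $s$ for which $\embDef{q}(s)$ is an \emph{odd} prime in $(\alpha, 2\alpha)$; by the prime number theorem $\abs{S} = \Omegah{\alpha/\log n}$. Setting $B = \set{s \in S : as \in \A}$ and using that multiplication by any fixed $s \in \Z_q^*$ is a bijection of $\Z_q^*$, we get $\Ep{\abs{B}} = \abs{S}\delta$. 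For each $b \in B$ the set $abX = bA$ has at least $4\alpha$ elements in an interval of size $\le \ceil{q/n}$, so \Cref{lem:technical} (applied with $r = q$ and $M = \alpha$; in the relevant regime $\abs{B} \le \abs{S} < \alpha$, so the size constraint is automatic) yields at least $\alpha\abs{B}$ ordered pairs of distinct $x,x' \in X$ with $\modNorm{q}{a(x-x')} < q/(m\alpha)$, and hence in expectation at least $\alpha\delta\abs{S}$ such pairs.

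The main obstacle, and the only step not obtained by direct transcription of the earlier proof, is to bound the same expectation from above. For fixed distinct $x,x' \in X$, write $d = x-x' = 2^k u$ with $u$ odd; for $a$ uniform on $\Z_q^*$, the product $ad = 2^k(au)$ is uniform on $2^k$ times the odd residues modulo $q/2^k$. The set $\set{z \in \Z_q : \modNorm{q}{z} < q/(m\alpha)}$ has size $\Oh{q/(m\alpha)}$ and contains $\Oh{q/(m\alpha 2^k)}$ multiples of $2^k$, so $\Prp{\modNorm{q}{a(x-x')} < q/(m\alpha)} = \Oh{1/(m\alpha)}$ uniformly in $d$. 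Summing over the $n(n-1)$ ordered pairs of distinct elements of $X$ gives $\alpha\delta\abs{S} = \Oh{n/\alpha}$, hence $\delta = \Oh{n\log n/\alpha^3}$, and $\Ep{M} = \Oh{\sqrt[3]{n\log n}}$ follows from the same telescoping sum used at the end of the proof of \Cref{thm:modP}.
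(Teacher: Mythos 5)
Your proposal is correct and follows essentially the same route as the paper: reduce to an interval of size $q/m$ containing $M$ elements of $aX$, take $B$ inside the set of (odd) primes in $(\alpha,2\alpha)$, apply \Cref{lem:technical} to get $\alpha\abs{B}$ close pairs, and compare with the per-pair collision probability $\Oh{1/(m\alpha)}$ before summing the tail. The only difference is that you spell out details the paper leaves implicit — the $2$-adic anticoncentration argument for $\modNorm{q}{a(x-x')}$ when $x-x'$ is not a unit, and the verification of the coprimality and $\abs{B}\le\alpha$ hypotheses — which are refinements of, not deviations from, the paper's argument.
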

\begin{proof}
Let $y$ be a random variable such that $\abs{h^{-1}(y) \cap X} = M$, and let $A = aX$.
The set $ah^{-1}(y)$ is an interval of size $\frac{q}{m}$ that contains exactly $M$
elements of $A$. 

Let $\alpha \in \left[1,\frac{n}{4}\right]$. We are now going to bound the probability that $M \ge 4\alpha$. Let 
$\delta = \Prp{M \ge 4\alpha}$, and let $\A$ be the set of all elements
$a_0 \in \Z_p^*$ such that $M \ge 4\alpha$ if $a = a_0$.

Let $S \subset \Z_q^*$ be the set of all elements $s \in \Z_q^*$
that satisfies that $\emb{q}{s}$ is a prime in the interval $(\alpha,2\alpha)$.
Let $B \subset S$ be the set of all elements $s \in S$ such that $as \in \A$. Note, that $B$
is a random variable. By linearity of expectation, we have that $\Ep{\abs{B}} = \abs{S}\delta$.
Recall, that $A = m^{-1}aX$. For any $b \in B$ we have that $ab \in \A$ and therefore there exists
an interval of size $\frac{q}{m}$ that contains at least $4\alpha$ elements of $bA$.
By \Cref{lem:technical}, this implies that there are $\alpha\abs{B}$ ordered pairs of
different elements $x,x' \in X$ such that $\modNorm{q}{ax-ax'} < \frac{q}{m\alpha}$.
So the expected number of elements $x,x' \in X$ such that $\modNorm{q}{a(x-x')} < \frac{q}{m\alpha}$
is at least $\alpha\Ep{\abs{B}} = \alpha \delta \abs{S}$. On the other hand, for each
ordered pair of different elements $x,x' \in X$ the probability that $\modNorm{q}{a(x-x')} < \frac{q}{m\alpha}$
is at most $\frac{4}{m\alpha}$, and by linearity of expectation the expected number of such
ordered pairs is at most
\begin{align*}
	n(n-1) \cdot \frac{4}{m\alpha}
	\le 
	\frac{4n}{\alpha}
	\, .
\end{align*}
We conclude that $\alpha \delta \abs{S} \le \frac{4n}{\alpha}$, and now we can bound the
expected value exactly as in \Cref{thm:modP}.
\end{proof}

\newpage

\bibliographystyle{plain}
\bibliography{bib}

%
%


\end{document}